\newcolumntype{R}[1]{>{\RaggedRight}p{#1}}
\newtheorem{theorem}{Theorem}
\newtheorem{lemma}{Lemma}
\newtheorem{example}{Example}
\newtheorem{definition}{Definition}
\def \be {\begin{equation}}
\def \ee {\end{equation}}
\def\@seccntformat#1{\@ifundefined{#1@cntformat}%
   {\csname the#1\endcsname\quad}
   {\csname #1@cntformat\endcsname}}
\newcommand\section@cntformat{}     
\begin{document}

\setstretch{1.2}


\title{The Public Good index for games with several levels of approval in the input and output}
\author{Sascha Kurz\\ \footnotesize University of Bayreuth\\\footnotesize sascha.kurz@uni-bayreuth.de}
\date{}
\maketitle
\begin{abstract}
  The Public Good index is a power index for simple games introduced by Holler and later axiomatized by Holler and Packel, so that some authors also speak of the 
  Holler--Packel index.\footnote{The paper is dedicated to the occasion of the 75th birthday of Manfred J.\ Holler.} A generalization to the class of games 
  with transferable utility was given by Holler and Li. Here we generalize the underlying ideas to games with several levels of approval in the input and output -- 
  so-called $(j,k)$ simple games. Corresponding axiomatizations are also provided.
  
  \medskip
  
  \noindent
  \textbf{Keywords:} Public Good index, Public Good value, $(j,k)$ simple games, simple games, TU games, values, axiomatization
\end{abstract}

\subsection{Introduction}
\label{sec_intro}
Assume that you are submitting a paper to a computer science conference (or some other scientific discipline with a similar reviewing convention). You paper is usually 
send to several reviewers, which are typically chosen by the programming committee or assign themselves in some kind of bidding procedure. Unattached the selection process, 
for each paper there exists a set $N$ of reviewers. The task of the reviewers is to read and to evaluate the submitted paper. Besides some comments and remarks in free text, 
a summarizing evaluation according to a certain predefined scale is requested. A typical scale consists e.g.\ of the possible answers {\lq\lq}strong accept{\rq\rq}, 
{\lq\lq}accept{\rq\rq}, {\lq\lq}weak accept{\rq\rq}, {\lq\lq}borderline{\rq\rq}, {\lq\lq}weak reject{\rq\rq}, {\lq\lq}reject{\rq\rq}, and {\lq\lq}strong reject{\rq\rq}. 
After every reviewer has announced his or her evaluation, these individual opinions are summarized to a group decision, where we assume that only the outcomes 
{\lq\lq}accept{\rq\rq} or {\lq\lq}reject{\rq\rq} are possible. Of course, this oversimplifies the practical setting where we may have discussion rounds between the 
reviewers with the possibility to adjust their evaluations or some kind of interaction with the authors of the paper. Such a decision rule $v$ may be formalized as follows: 
For some set of agents $N$ and a set of levels of approval for the input $J$, each vector in $J^{|N|}$ is mapped to an element of the set of levels of approval in the output 
$K$. In our example we have $|J|=7$ and $|K|=2$, but may also consider an output set $K$ of cardinality three by distinguishing between a lecture, a poster presentation, or 
rejection. If the options in $J$ can be mapped to a numerical score, like e.g.\ $+3,+2,+1,0,-1-,2,-3$ in our example, then such a decision rule might be simply given by some 
threshold $\tau$. I.e., accept all papers with mean of the scores at least $\tau$. However, rules might be more complicated including extra conditions like e.g.\ requiring 
that no paper with at least one {\lq\lq}strong reject{\rq\rq} is accepted. Given a specific decision rule $v$ one might ask for the {\lq\lq}influence{\rq\rq} of a specific 
agent $i\in N$ on the group decision. Having only homogeneous agents in mind this question does not seem to make too much sense. However, agents may also be heterogeneous. 
In our example the reviewers may have different levels of expertise, which is indeed a common query to the reviewer when writing his or her evaluation. Of course, we as the 
author of the paper usually do not have the details to determine the influence of the individual reviewers and should have little interest to do so, but the author of the 
one day is the organizer of a huge conference the other day and possibly in charge to design the details of the decision rules. 

Taking our exemplifying story aside, we can clearly imagine situations where the individual opinions of $|N|$ agents from an ordered set $J$ of inputs are mapped 
to an output from an ordered set $K$. To this end $(|J|,|K|)$ simple games have been introduced, see e.g.\ \cite{freixas2003weighted,freixas2009anonymous}, and we remark 
that simple games are in one-to-one correspondence to $(2,2)$ simple games with $J=\{0,1\}$ and $K=\{0,1\}$. Measurements of influence for simple games are also called 
power indices and the Public Good index, introduced in \cite{holler1982forming}, is a particular example. The question of this paper is whether a measure in the vein of 
the Public Good index can be defined for the class of $(j,k)$ simple games. We motivate a few variants and give a axiomatizations. An axiomatization of the Public Good 
index for simple games was given in \cite{holler1983power}, so that some people also speak of the Holler--Packel index, and the generalization to TU games was axiomatized in 
\cite{holler1995public}. A different axiomatization, for both cases and based on potential functions, was given in \cite{haradau2007potential}. For $(j,2)$ simple games 
a Public Good index was recently introduced in \cite{sebastien2020public} along with two axiomatizations.

The remaining part of this paper is structured as follows. In Section~\ref{sec_preliminaries} we summarize some necessary preliminaries from the literature before we discuss 
different generalizations of the Public Good index and corresponding axiomatizations to the class of $(j,k)$ simple games in Section~\ref{sec_axiom}.

\subsection{Preliminaries}
\label{sec_preliminaries}

Let $N=\left\{ 1,2,...,n\right\}$ be a finite set of agents or voters. Any subset $S$ of $N$ is called a coalition and the set of all coalitions of $N$ is denoted by
the power set $2^{N}$. For given integers $j,k\ge 2$ we denote by $J=\{0,\dots,j-1\}$ the possible input levels and by $K=\{0,\dots, k-1\}$ the possible
output levels, respectively. We write $x\le y$ for $x,y\in\mathbb{R}^n$ if
$x_i\le y_i$ for all $1\le i\le n$. For each $\emptyset\subseteq S\subseteq N$ we write $x_S$ for the restriction
of $x\in\mathbb{R}^n$ to $\left(x_i\right)_{i\in S}$. As an abbreviation, we write $x_{-S}=x_{N\backslash S}$. Instead of $x_{\{i\}}$ and
$x_{-\{i\}}$ we write $x_i$ and $x_{-i}$, respectively. Slightly abusing notation we write $\mathbf{a}\in\mathbb{R}^n$, for the vector that
entirely consists of $a$'s, e.g., $\mathbf{0}$ for the all zero vector. 

\begin{definition}
  Let $j,k\ge 2$ and $n\ge 0$ be integers. A \emph{(j,k) simple game} is a mapping $v\colon J^n\to K$ satisfying $v(\mathbf{0})=0$ and $v(x)\le v(y)$ for all 
  $x,y\in J^n$ with $x\le y$.\footnote{Some authors also require $v(\mathbf{j-1})=k-1$, which would clash with the potential function approach as it is the case 
  for simple games. Note that we have reversed the order of the input levels of approval compared to \cite{freixas2003weighted}.} 
\end{definition}

\begin{example}
  \label{ex_3_3_simple_game}
  For $n=j=k=3$ let the $(3,3)$ simple game $v$ be defined via 
  $$
    v(x)=\left\{\begin{array}{rcl}
      0 &:& 3x_1+2x_2+x_3<7 \\ 
      1 &:& 7\le 3x_1+2x_2+x_3 <12\\ 
      2 &:& x_1=x_2=x_3=2     
    \end{array}\right.
  $$  
  for all $x\in \{0,1,2\}^3$.
\end{example}

\begin{definition}
  A \emph{simple game} is a mapping $v\colon 2^N\to\{0,1\}$ that satisfies $v(\emptyset)=0$, $v(N)=1$, and $v(S)\le v(T)$ for all $\emptyset\subseteq S\subseteq T\subseteq N$, where 
  the finite set $N$ is called the \emph{player set} or \emph{set of players}.\footnote{In some papers $v(S)\le v(T)$ is dropped in the definition of a simple game and they speak of \emph{monotonic 
  simple games} is it is additionally assumed. For the potential function approach we will drop the condition $v(N)=1$ later on, while it is indeed necessary for the normalized 
  Public Good index.}
\end{definition}
Let $v$ be a simple game with player set $N$. A subset $S\subseteq N$ is called \emph{winning coalition} if $v(S)=1$ and \emph{losing coalition} otherwise. A winning coalition 
$S\subseteq N$ is called \emph{minimal winning coalition} if all proper subsets $T\subsetneq S$ of $S$ are losing. The set of minimal winning coalitions is denoted by 
$\operatorname{MWC}(v)$.

\begin{example}
  \label{ex_simple_game}
  For player set $N=\{1,2,3\}$ let $v$ be the simple game defined by $v(S)=1$ iff $w(S):=\sum_{i\in S}w_i\ge 3$ and $v(S)=0$ otherwise for all 
  $S\subseteq N$, where $w_1=3$, $w_2=2$, and $w_3=1$.
\end{example}

The winning coalitions of the simple game from Example~\ref{ex_simple_game} are given by $\{1\}$, $\{2,3\}$, $\{1,2\}$, $\{1,3\}$, and $\{1,2,3\}$. Only $\{1\}$ and $\{2,3\}$ are 
minimal winning coalitions.

In order to embed a given simple game $v\colon 2^N\to\{0,1\}$ as a $(2,2)$ simple game $\hat{v}$ with $J=\{0,1\}$ and $K=\{0,1\}$, we assume $N=\{1,\dots,n\}$. To each 
coalition $S\subseteq N$ we assign the vector $x^S\in \{0,1\}^n$ with $x_i^S=1$ iff $i\in S$ and $x_i^S=0$ otherwise. Given a vector $x\in\{0,1\}^n$ the corresponding 
coalition is given by $S=\{i\in N\mid x_i=1\}$, so that $v(S)=\hat{v}(x^S)$.

The (raw) Public Good index for a simple game $v$ with player set $N$ and a player $i\in N$ is given by
\begin{equation}
  \operatorname{PGI}_i(v)=|\left\{S\in \operatorname{MWC}(v)\mid i\in S\right\}|.
\end{equation} 
With this, the (normalized) Public Good index is given by 
\begin{equation}
  \overline{\operatorname{PGI}_i}(v)=\frac{\operatorname{PGI}_i(v)}{\sum_{j\in N} \operatorname{PGI}_j(v) }
\end{equation}
and is e.g.\ \emph{efficient}, i.e., $\sum_{i\in N} \overline{\operatorname{PGI}_i}(v)=1$. Note that for the normalized version it is important to assume 
that $v(N)=1$ since $\operatorname{MWC}(v)$ is empty otherwise, so that $\overline{\operatorname{PGI}_i}(v)$ would be undefined.

A generalization of simple games, without the monotonicity assumption, are games with transferable utility -- so-called TU games. 
\begin{definition}
  A \emph{TU game} is a mapping $v\colon 2^N\to\mathbb{R}$ with $v(\emptyset)=0$, where the finite set $N$ is called the \emph{player set} or \emph{set of players}.  
\end{definition}
If we additionally assume $v(S)\le v(T)$ for all $\emptyset \subseteq S\subseteq T\subseteq N$, we speak of a \emph{monotone TU game} or a \emph{capacity}.

The analog of minimal winning coalitions in the context of TU games are \emph{minimal crucial coalitions}, see e.g.\ \cite{haradau2007potential} or 
\emph{real gaining coalitions}, see \cite{holler1995public}. To this end, we call a player $i\in S\subseteq N$ \emph{crucial} in a TU game $v$ if $v(S)>v(S\backslash i)$. 
A coalition $S$ in which every player $i$ is crucial is called \emph{minimal crucial coalition} and the set of minimal crucial coalitions is denoted by 
$\operatorname{MCC}(v)$. A coalition $S\subseteq N$ is called a real gaining coalition if $v(S)-v(T)>0$ for all proper subsets $\emptyset \subseteq T\subsetneq S$ of $S$. 
The set of all real gaining coalitions of $v$ is denoted by $\operatorname{RGC}(v)$. Note that for monotone TU games there is no difference between a minimal crucial and 
a real gaining coalition, i.e., $\operatorname{MCC}(v)=\operatorname{RGC}(v)$. With these generalized notions, the Public Good value for a TU game $v$ with player set 
$N$ and a player $i\in N$ is given by 
\begin{equation}
  \label{eq_pgv}
  \operatorname{PGV}_i(v)=\sum_{S\in \operatorname{MCC}(v), i\in S} v(S),\footnote{Note that the authors from \cite{holler1995public} 
  used the definition $\operatorname{PGV}_i(v)=\sum_{S\in \operatorname{RGC}(v), i\in S} v(S)$, while the authors from \cite{haradau2007potential} 
  used $\operatorname{PGV}_i(v)=\sum_{S\in \operatorname{MCC}(v), i\in S} v(S)$. As already mentioned, there is no difference for monotone TU games. 
  Also the axiomatization of the Public Good value from \cite{haradau2007potential} can be slightly adjusted by replacing the notion of minimal critical coalitions by 
  real gaining coalitions in their definition of $\pi(v,N)$ and the corresponding axiom of \textit{distributing the worths} of MCCs.}
\end{equation}     
so that $\operatorname{PGI}_i(v)=\operatorname{PGV}_i(v)$ if $v$ is a simple game. 
For the rest of the article, we will refer to $\operatorname{MCC}(v)$ as the minimal critical coalitions of $v$.

Let $\Gamma$ be a subclass of all TU games. A \emph{value} on $\Gamma$ is a function $\Psi$ that maps each game $v\in\Gamma$ to $\mathbb{R}^{|N|}$, where $N$ is the 
player set of $v$. An example of a value is the Public Good value $\operatorname{PGV}$, defined componentwise in Equation~(\ref{eq_pgv}). A \emph{potential} 
on $\Gamma$ is a function $P$ that maps each game $v\in\Gamma$ to a real number $P(v)$.

\begin{definition}
  A value $\Psi$ on $\Gamma$ \emph{admits a potential function} if there exists a potential $P\colon \Gamma\to\mathbb{R}$ such that
  \begin{equation} 
    \Psi_i(v)=P(v)-P(v_{-i})
  \end{equation}
  for all $v\in \Gamma$ and all $i\in N$, where $N$ is the player set of $v$ and $v_{-i}$ is the TU game with player set $N\backslash \{i\}$ defined by 
  $v_{-i}(S)=v(S)$ for all $\emptyset\subseteq S\subseteq N\backslash\{i\}$.
\end{definition}
Note that the subclass $\Gamma$ of TU games has to be closed with respect to taking subgames $v_{-i}$ in order to apply this definition. So, from a technical point 
of view we either have to include the game $v_{\emptyset}$ with empty player set in the set of TU games and subclasses of TU games $\Gamma$ or define $P(v_{\emptyset}):=0$ 
separately (which is the usual choice).\footnote{If we do not set $P(v_{\emptyset})=0$, then the potential of a value is only determined up to an additive constant.} As 
shown in \cite[Proposition 1]{haradau2007potential} the Public Good value $\operatorname{PGV}$ admits a potential $P$ on the 
class $\Gamma$ of (monotone) TU games, where
\begin{equation}
  P(v)=\sum_{S\in \operatorname{MCC}(v)} v(S).
\end{equation} 
Note that each minimal critical coalition $S$ in $v$ with $i\notin S$ is also a minimal critical coalition in $v_{-i}$ and vice versa. Analogously, that each real gaining 
coalition $S$ in $v$ with $i\notin S$ is also a real gaining coalition in $v_{-i}$ and vice versa. 

We say that a value $\Psi$ on $\Gamma$ \emph{distributes the sum of the worths of the minimal critical coalitions for all players} in $v$ iff
\begin{equation}
  \sum_{i\in N} \Psi_i(v)=\sum_{i\in N} \sum_{S\in\operatorname{MCC}(v), i\in S} v(S)=\sum_{S\in\operatorname{MCC}(v)} |S|\cdot v(S)
\end{equation}
for all $v\in \Gamma$, where $N$ is the player set of $v$. With this, \cite[Proposition 2]{haradau2007potential} states that the Public Good value 
$\operatorname{PGV}$ is the unique value that admits a potential and distributes the sum of the worths of the minimal critical coalitions for all players 
on the class of monotone TU games. The great advantage of an axiomatization via a potential is that this also gives an axiomatization for all subclasses 
$\Gamma'$ of TU games that are closed with respect to taking subgames $v_{-i}$. So, if we relax the condition $v(N)=1$ of a simple game, we also obtain 
an axiomatization for simple games. Note that while $v(N)=1$ it may happen that $v_{-i}(N\backslash\{i\})\neq 1$, i.e., $v_{-i}$ does not contain a 
winning coalition, which happens if player $i$ is a so-called \emph{vetoer}.

Another common property of values is linearity. To this end we note that TU games form an $\mathbb{R}$-vector space with sum $(v+v')(S):=v(S)+v'(S)$ and 
scalar multiplication $(\lambda \cdot v)(S):=\lambda\cdot v(S)$ for all TU games $v,v'$ with the same player set $N$, all $\lambda\in\mathbb{R}$, and 
all $S\subseteq N$. With this, a value $\Psi$ is called \emph{linear} if $\Psi(v+v')=\Psi(v)+\Psi(v')$ and $\Psi(\lambda \cdot v)=\lambda\cdot\Psi(v)$. 
From Equation~(\ref{eq_pgv}) we can directly conclude that the Public Good value $\operatorname{PGV}$ is linear. If only the first property, on the sum of 
two TU games holds, then one speaks of \emph{additivity}. Since the sum of two simple game (considered as TU games) does not need to be a simple game, the 
so-called \emph{transfer axiom} was introduced by Dubey \cite{dubey1975uniqueness}:
$$
  \Psi(v\wedge v')+\Psi(v\vee v')=\Psi(v)+\Psi(v'),
$$
where $\left(v\wedge v'\right)(S)=\min\{v(S),v'(S)\}$ and $\left(v\vee v'\right)(S)=\max\{v(S),v'(S)\}$ for all simple games $v,v'$ with the same 
player set $N$ and all coalitions $S\subseteq N$. Note that the definition of $\wedge$ and $\vee$ might also be applied to general TU games. In our 
context we only use $v\oplus v':=v\vee v'$ for two simple or TU games $v$, $v'$. Two simple games $v$ and $v'$ are called \emph{mergeable} if 
$S\in\operatorname{MWC}(v)$ and $S'\in\operatorname{MWC}(v')$ implies $S\not\subseteq S'$ and $S'\not\subseteq S$. 
The identity $\operatorname{PGI}_i(v\oplus v')=\operatorname{PGI}_i(v)+\operatorname{PGI}_i(v')$ for the raw Public Good index for two mergeable simple games 
was used in \cite{holler1983power} to axiomatize the normalized Public Good index.  
Similarly, for two $(j,k)$ games $v$ and $v'$ we define  
$(v\oplus v')(x)=\max\{v(x),v(x')\}$ for all $x\in J^n$, where $n$ is the number of players of $v$ and $v'$. 

\subsection{Generalizing the Public Good index to $\mathbf{(j,k)}$ simple games}
\label{sec_axiom}
The first question we have to answer is that for a suitable generalization of the concept of a minimal winning coalition in a simple game to an 
arbitrary $(j,k)$ simple game. Having the definition of minimal critical and real gaining coalitions for TU games in mind, we propose:
\begin{definition}
  Let $v$ be a $(j,k)$ simple game with player set $N=\{1,\dots,n\}$ and $J=\{0,1,\dots,j-1\}$. A vector $x\in J^n$ is called 
  \emph{minimal critical} if $v(x)>v(x')$ for all $x'\in J^n$ with $x'\le x$ and $x'\neq x$. The set of minimal critical vectors of $v$ is denoted by  
  $\operatorname{MCV}(v)$. 
\end{definition}
Note that for $j=2$ and $k=2$ each minimal critical vector $x$ corresponds to a minimal winning coalition $S=\{1\le i\le n\mid x_i=1\}$ in the corresponding 
simple game. For $j=2$ and arbitrary $k\ge 2$ we can embed a $(2,k)$ simple game $v$ as a TU game $\hat{v}$, so that the minimal critical vectors of $v$ are 
in $1$-to-$1$ correspondence with the minimal critical coalitions of $\hat{v}$. 

Let $\Gamma$ be a subclass of all $(j,k)$ simple games, where $j\ge 2$ and $k\ge 2$ are arbitrary but fixed. A \emph{value} on $\Gamma$ is a function $\Psi$ that maps 
each game $v\in\Gamma$ to $\mathbb{R}^{|N|}$, where $N$ is the player set of $v$. A \emph{potential} on $\Gamma$ is a function $P$ that maps each game $v\in\Gamma$ to $\mathbb{R}$.
\begin{definition}
  A value $\Psi$ on a subclass $\Gamma$ of $(j,k)$ simple games \emph{admits a potential function} if there exists a potential $P\colon \Gamma\to\mathbb{R}$ such that
  \begin{equation} 
    \Psi_i(v)=P(v)-P(v_{-i})
  \end{equation}
  for all $v\in \Gamma$ and all $i\in N$, where $N$ is the player set of $v$ and $v_{-i}$ is the $(j,k)$ simple game with player set $N\backslash \{i\}$ defined by 
  $v_{-i}(x)=v(y)$ for all $x\in J^{N\backslash\{i\}}$ and $y\in J^N$ with $y_i=0$ and $y_j=x_j$ for all $j\in N\backslash \{i\}$.\footnote{By $A^B$ we denote the 
  set of all mappings from $B$ to $A$ whose cardinality is $|A|^{|B|}$.} Moreover, we set $P(v_\emptyset):=0$ for a game $v_\emptyset$ with empty player set.
\end{definition}
Again, the subclass $\Gamma$ of $(j,k)$ simple games has to be closed with respect to taking subgames $v_{-i}$ in order to apply this definition. We observe that 
each minimal critical vector $x$ of $v$ with $x_i=0$ is also a minimal critical vector of $v_{-i}$ if we remove the entry for $x_i$ (so that it is a vector in 
$J^{N\backslash\{i\}}$) and vice versa. We say that a value $\Psi$ on a subclass $\Gamma$ of $(j,k)$ simple games \emph{distributes the sum of the worths of the 
minimal critical vectors for all players} in $v$ iff
\begin{equation}
  \sum_{i=1}^n \Psi_i(v)=\sum_{i=1}^n \sum_{x\in\operatorname{MCV}(v), x_i\neq 0} v(x)=\sum_{x\in\operatorname{MCV}(v)} v(x)\cdot\big\vert\left\{1\le i\le n\mid x_i\neq 0\right\}\big\vert=:\Lambda(v)
\end{equation}
for all $v\in \Gamma$, where $N=\{1,\dots, n\}$ is the player set of $v$.

\begin{theorem}
  \label{thm_axiomatization_pot}
  Let $j,k\ge 2$ be integers. Then, there exists a unique value $\Psi$ on the class $\Gamma$ of all $(j,k)$ simple games that admits a potential function and 
  distributes the sum of the worths of the minimal critical vectors for all players. We have
  \begin{equation}
    \label{eq_psi_jk_pgi}
    \Psi_i(v)=\sum_{x\in\operatorname{MCV}(v),x_i\neq 0} v(x)
  \end{equation}
  for all $v\in\Gamma$ and all $i$ in the player set $\{1,\dots,n\}$ of $v$. The potential function is given by
  \begin{equation}
    \label{eq_pot_jk_pgi}
    P(v)=\sum_{x\in\operatorname{MCV}(v)} v(x)
  \end{equation}
  for all $v\in\Gamma$.  
\end{theorem}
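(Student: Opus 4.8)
The plan is to split the argument into an existence part and a uniqueness part, with the potential function carrying almost all of the weight. For existence I would simply check that the explicit $\Psi$ and $P$ from Equations~(\ref{eq_psi_jk_pgi}) and~(\ref{eq_pot_jk_pgi}) satisfy both required properties. For uniqueness I would exploit the fact that \emph{admitting a potential} together with the \emph{distribution} axiom forces a recursion that pins down the potential completely, and hence pins down the value.

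First I would verify that $P$ is a potential for $\Psi$. The crucial input is the observation already recorded before the theorem: deleting the $i$-th coordinate yields a bijection between $\{x\in\operatorname{MCV}(v)\mid x_i=0\}$ and $\operatorname{MCV}(v_{-i})$ that preserves worths, so that $P(v_{-i})=\sum_{x\in\operatorname{MCV}(v_{-i})}v_{-i}(x)=\sum_{x\in\operatorname{MCV}(v),\,x_i=0}v(x)$. Subtracting this from the full sum telescopes exactly:
\[
P(v)-P(v_{-i})=\sum_{x\in\operatorname{MCV}(v)}v(x)-\sum_{x\in\operatorname{MCV}(v),\,x_i=0}v(x)=\sum_{x\in\operatorname{MCV}(v),\,x_i\neq 0}v(x)=\Psi_i(v),
\]
which is precisely the potential identity. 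The distribution property is then immediate by swapping the order of summation, since $\sum_{i=1}^n\Psi_i(v)=\sum_{i=1}^n\sum_{x\in\operatorname{MCV}(v),\,x_i\neq 0}v(x)=\sum_{x\in\operatorname{MCV}(v)}v(x)\cdot\big\vert\{1\le i\le n\mid x_i\neq 0\}\big\vert=\Lambda(v)$.

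For uniqueness, let $\Psi'$ be any value on $\Gamma$ that admits some potential $P'$ (with $P'(v_\emptyset)=0$ by convention) and distributes the worths. Summing the identity $\Psi'_i(v)=P'(v)-P'(v_{-i})$ over all $n$ players and invoking the distribution axiom gives $n\cdot P'(v)-\sum_{i=1}^n P'(v_{-i})=\Lambda(v)$, hence
\[
P'(v)=\frac{1}{n}\left(\Lambda(v)+\sum_{i=1}^n P'(v_{-i})\right)
\]
for every game with $n\ge 1$ players. Since each $v_{-i}$ has one fewer player and $\Lambda(v)$ depends only on $v$, this is a genuine recursion anchored at $P'(v_\emptyset)=0$. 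An induction on $n$ then shows that $P'$ is determined on all of $\Gamma$; as the explicit $P$ from the existence part satisfies the same recursion, we conclude $P'=P$ and therefore $\Psi'=\Psi$.

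The only genuinely non-routine ingredient is the coordinate-deletion bijection on minimal critical vectors, and this has already been supplied before the theorem statement; once it is available, both directions reduce to reindexing a finite sum and to a one-line induction. The two points I would double-check are that the bijection really preserves worths, so that the telescoping in the existence part is exact, and that dividing by $n$ in the recursion is legitimate -- which it is, because the recursion is only ever applied to games with at least one player, while the empty game is fixed by the convention $P(v_\emptyset)=0$.
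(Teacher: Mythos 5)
Your proposal is correct and follows essentially the same route as the paper: the same coordinate-deletion correspondence between $\{x\in\operatorname{MCV}(v)\mid x_i=0\}$ and $\operatorname{MCV}(v_{-i})$ gives the telescoping identity $P(v)-P(v_{-i})=\Psi_i(v)$, and the same recursion $\tilde{P}(v)=\bigl(\Lambda(v)+\sum_{i=1}^n\tilde{P}(v_{-i})\bigr)/n$ anchored at $P(v_\emptyset)=0$ yields uniqueness. The only cosmetic difference is that the paper phrases the induction via the subgames $v_S$ for $S\subseteq N$ (using $\left(v_S\right)_T=v_T$) rather than your induction on the number of players, which is the same argument.
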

\begin{proof}
  First we assume that the potential is given by Equation~(\ref{eq_pot_jk_pgi}). Since $\Psi$ admits a potential function we have
  \begin{eqnarray*}
    \Psi_i(v) &=& P(v)-P(v_{-i})=\sum_{x\in\operatorname{MCV}(v)} v(x) \,-\, \sum_{x\in\operatorname{MCV}(v_{-i})} v_{-i}(x) \\ 
     &=& \sum_{x\in\operatorname{MCV}(v)} v(x) \,-\, \sum_{x\in\operatorname{MCV}(v_{-i})} v(x) \\ 
     &=& \sum_{x\in\operatorname{MCV}(v),x_i\neq 0} v(x)
  \end{eqnarray*}
  for all $v\in \Gamma$ and all $i$ in the player set of $v$, where we have used the relation between the minimal critical vectors of $v$ and those of 
  $v_{-i}$. Thus, Equation~(\ref{eq_psi_jk_pgi}) is valid. With this we have
  $$
    \sum_{i=1}^n \Psi_i(v)=\sum_{i=1}^n\sum_{x\in\operatorname{MCV}(v),x_i\neq 0} v(x) = \sum_{x\in\operatorname{MCV}(v)} v(x)\cdot\big\vert \left\{1\le i\le n\mid x_i\neq 0\right\}\big\vert=
    \Lambda(v),
  $$
  i.e., $\Psi$ distributes the sum of the worths of the minimal critical vectors for all players and so satisfies both axioms.
  
  For the other direction we assume that $\Psi$ admits a potential $\tilde{P}$, so that
  $$
    \Lambda(v)=\sum_{i=1}^n \Psi_i(v)=\sum_{i=1}^n\big(\tilde{P}(v)-\tilde{P}(v_{-i})\big)=n\cdot \tilde{P}(v)-\sum_{i=1}^n \tilde{P}(v_{-i}),
  $$
  which is equivalent to
  \begin{equation}
    \label{eq_pot_rec}
    \tilde{P}(v)=\frac{\Lambda(v)+\sum_{i=1}^n \tilde{P}(v_{-i})}{n}
  \end{equation}
  for each $v\in\Gamma$, where $N=\{1,\dots,n\}$ is the player set of $v$. For each $S\subseteq N$ we denote by $v_S$ the $(j,k)$ simple game 
  with player set $S$ defined by $v_S(x)=v(y)$ for all $x\in J^S$, where $y\in J^N$ with $y_j=x_j$ for all $j\in S$ and $y_j=0$ otherwise. E.g.\ $v_{-1}=v_{N\backslash\{i\}}$ and 
  $v_N=v$. Since $\left(v_S\right)_T=v_T$ for all $\emptyset\subseteq T\subseteq S\subseteq N$ Equation~(\ref{eq_pot_rec}) can be generalized to 
  $$
    \tilde{P}(v_S)=\frac{\Lambda(v_S)+\sum_{i\in S} \tilde{P}(v_{S\backslash \{i\}})}{|S|}
  $$ 
  for all $\{i\}\subseteq S\subseteq N$. So, starting from $\tilde{P}(v_\emptyset)=0$, we can recursively compute $\tilde{P}(v_S)$ for all $\emptyset\neq S\subseteq N$, so that 
  especially $\tilde{P}(v)=\tilde{P}(v_N)$ is uniquely defined.
\end{proof}
We call  the value $\Psi$ for $(j,k)$ simple games defined by Equation~(\ref{eq_psi_jk_pgi}) \emph{Public Good value} (for $(j,k)$ simple games). For the $(3,3)$ simple game 
$v$ from Example~\ref{ex_3_3_simple_game} the minimal critical vectors are $(1,1,2)$, $(1,2,0)$, $(2,0,1)$, $(2,1,0)$, and $(2,2,2)$, where $v(x)=1$ for 
all $x\in\operatorname{MCV}(v)\backslash\{(2,2,2)\}$ and $v((2,2,2))=2$. With this we compute
$$
  \Psi_1(v)=6,\quad \Psi_2(v)=5,\quad\text{and } \Psi_3(v)=4
$$
for the value $\Psi$ characterized in Theorem~\ref{thm_axiomatization_pot}.

We would like to remark that we also may motivate a different definition for a Public Good value for $(j,k)$ simple games. To this end we define the vector $y=x\!\downarrow\! i\in J^n$ 
for each $x\in J^n$ with $x_i\neq 0$ by $y_j=x_j$ for all $j\neq i$ and $y_i=x_i-1$. Assume that agent $i$ has strictly increasing costs in $i$ and that the rewards are 
strictly increasing in $v(x)$.\footnote{For $(2,2)$ simple games represented as simple games this means that entering a coalition comes at a certain cost while a coalition gets 
a reward iff it is a winning coalition.} As in the process of a coalition forming member by member we may imagine that starting from $x=\mathbf{0}$ the final vector $x$ 
forms step by step via the inverse operation of $\downarrow$.\footnote{More precisely, for each $x\in J^n$ with $x_i\neq j-1$ we can define the vector $y=x\!\uparrow\!i\in J^n$ by 
$y_j=x_j$ for all $j\neq i$ and $y_i=x_i+1$.} So, similarly, as one can argue that only minimal winning coalitions will be formed, we deduce that under the described model for 
every finally formed vector $x\in J^n$ with $v(x)\neq 0$ we have $x\in \operatorname{MCV}(v)$. Now what is the contribution of a player $i$ to a minimal critical 
vector $x$ with $x_i\neq 0$ to the worth $v(x)$? If the answer is $v(x)$, then we end up with the value characterized in Theorem~\ref{thm_axiomatization_pot}. However, 
if we have a look at the minimal critical vector $x=(2,2,2)$ in the $(3,3)$ simple game $v$ from Example~\ref{ex_3_3_simple_game}, then $v(1,2,2)=v(2,1,2)=v(2,2,1)=1$ may 
justify the assumption that every player contributes just a surplus of $1$ to the worth of vector $x$. Thus, we would obtain a value defined by
\begin{equation}
  \label{eq_pgi_variant}
  \Psi_i(v)=\sum_{x\in\operatorname{MCV}(v),x_i\neq 0} \big(v(x)-v(x\!\downarrow\!i)\big).
\end{equation}
Note the similarity to the Banzhaf index. For simple games the difference is that we sum over all minimal winning instead of all winning coalitions. For the $(3,3)$ simple 
game $v$ from Example~\ref{ex_3_3_simple_game} we would obtain 
$$
  \Psi_1(v)=5,\quad \Psi_2(v)=4,\quad\text{and } \Psi_3(v)=3.
$$  
We observe that there is no difference between both variants if $k=2$. And indeed, they match the variant introduced in \cite{sebastien2020public}. For all $(j,k)$ simple games 
not identically mapping to zero we define the normalized version
\begin{equation} 
  \label{eq_pgi_variant_normalized}
  \overline{\Psi}_i(v)=\frac{\Psi_i(v)}{\sum_{j=1^n} \Psi_j(v)}.
\end{equation}
Excluding the $(j,k)$ simple games $v\equiv 0$, we speak of \emph{non-trivial} $(j,k)$ simple games. Our next aim is an axiomatization for $\overline{\Psi}$. To this 
end we propose a generalization of mergeability for simple games:
\begin{definition}
  \label{def_mergeable}
  Two $(j,k)$ simple games $v$ and $v'$ with the same player set $\{1,\dots,n\}$ are \emph{mergeable} if
  \begin{enumerate}
    \item[(1)] $\operatorname{MCV}(v)\cap \operatorname{MCV}(v')=\emptyset$;
    \item[(2)] $x\in\operatorname{MCV}(v)$, $x'\in\operatorname{MCV}(v')$, $x\le x'$ \quad $\Rightarrow$\quad $v(x)<v'(x')$; and 
    \item[(3)] $x\in\operatorname{MCV}(v)$, $x'\in\operatorname{MCV}(v')$, $x\ge x'$ \quad $\Rightarrow$\quad $v(x)>v'(x')$.
  \end{enumerate}
\end{definition}
Note that (2) and (3) imply (1). Since $v(x)>0$ for all $x\in\operatorname{MCV}(v)$ the definition for $(2,2)$ simple games goes in line with the definition for simple games. 
Actually, we have $v(x)=1$ for every minimal critical vector of some $(j,2)$ simple game. If $k>2$, then we have to distinguish the critical vectors according to their output 
value $v(x)$. Next we study the relation of the minimal critical vectors of the sum of two mergeable $(j,k)$ simple games with those of their {\lq\lq}summand games{\rq\rq}.
\begin{lemma}
  \label{lemma_critical_vector_below}
  Let $v$ be a $(j,k)$ simple game with player set $\{1,\dots,n\}$. For each vector $x\in J^n$ with $v(x)>0$ there exists a vector $x'\le x$ with $v(x')=v(x)$ and $x'\in \operatorname{MCV}(v)$.
\end{lemma}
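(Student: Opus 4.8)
The plan is to produce $x'$ by a minimal-descent argument that exploits the monotonicity of $v$. First I would consider the level set below $x$,
\[
  A \;=\; \bigl\{\, y\in J^n \;:\; y\le x,\ v(y)=v(x) \,\bigr\},
\]
which is finite and non-empty, since $x\in A$. Because $A$ is a finite subset of the partially ordered set $\bigl(J^n,\le\bigr)$, it possesses at least one element $x'$ that is minimal with respect to $\le$, i.e.\ no $y\in A$ satisfies $y\le x'$ and $y\neq x'$. I would fix such an $x'$; by construction $x'\le x$ and $v(x')=v(x)$, so the only remaining task is to verify that $x'\in\operatorname{MCV}(v)$.

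To check that $x'$ is minimal critical, I would argue by contradiction. Suppose $x'\notin\operatorname{MCV}(v)$. By the definition of a minimal critical vector this means there exists $z\in J^n$ with $z\le x'$, $z\neq x'$, and $v(x')\le v(z)$, the latter being the negation of the strict inequality $v(x')>v(z)$. Here the key leverage is monotonicity: from $z\le x'$ we get $v(z)\le v(x')$, and combined with $v(x')\le v(z)$ this forces $v(z)=v(x')=v(x)$. But then $z\le x'\le x$ gives $z\le x$, so $z\in A$, while $z\le x'$ with $z\neq x'$ contradicts the minimality of $x'$ in $A$. Hence $x'\in\operatorname{MCV}(v)$, which completes the argument.

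The step I expect to require the most care is the interplay between the ``value stays constant'' constraint and the monotonicity hypothesis: it is monotonicity that upgrades the a priori one-sided inequality $v(x')\le v(z)$ coming from the failure of criticality into the equality $v(z)=v(x)$, which is exactly what keeps $z$ inside the level set $A$. Everything else is bookkeeping. The existence of a $\le$-minimal element of $A$ is immediate from finiteness; equivalently, one could phrase the whole proof as an explicit descent in which $\sum_{i} y_i$ strictly decreases at each replacement and hence terminates after finitely many steps, landing on a minimal critical vector. The assumption $v(\mathbf{0})=0<v(x)$ plays only the mild role of ensuring that we stay within the admissible range and never confuse the target value with the base value.
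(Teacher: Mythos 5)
Your proof is correct and takes essentially the same approach as the paper: the paper iteratively applies the operator $\downarrow$ (a finite descent that preserves the value $v(x)$, justified by monotonicity), which is exactly the explicit-descent phrasing you sketch at the end, and your choice of a $\le$-minimal element of the level set $A$ is the standard extremal repackaging of that descent. If anything, your write-up is slightly tighter, since it makes explicit the monotonicity sandwich ($v(z)\le v(x')\le v(z)$, hence equality) that the paper leaves implicit when asserting that a non-minimal-critical vector admits a single-coordinate decrement of equal value.
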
 
\begin{proof}
  If $x\in \operatorname{MCV}(v)$, then the statement is true for $x'=x$. Otherwise there exists a player $1\le i\le n$ with $x_i\neq 0$ such that 
  $v(x)=v(x\!\downarrow\!i)$. If $x\!\downarrow\!i\in\operatorname{MCV}(v)$, then we can set $x'=x\!\downarrow\!i$ and are done. Otherwise we iteratively 
  apply the operator $\downarrow$ (which terminates since the number of players and output levels is finite).
\end{proof}
We remark that the minimal critical vector $x'$ does not need to be unique. To this end we may slightly adjust the $(3,3)$ simple game $v$ from Example~\ref{ex_3_3_simple_game} 
by setting $v(x)=1$ for $x=(2,2,2)$.
\begin{lemma}
  Let $v$ and $v'$ be two $(j,k)$ simple games with the same player set $\{1,\dots,n\}$ that are mergeable. Then, we have
  $$
    \operatorname{MCV}(v\oplus v')=\operatorname{MCV}(v)\cup\operatorname{MCV}(v').
  $$
\end{lemma}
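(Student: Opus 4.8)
The plan is to prove the two inclusions separately, after first recording a convenient reformulation of minimal criticality in terms of the one-step operator $\downarrow$. For a $(j,k)$ simple game $u$ and a vector $x\neq\mathbf{0}$, monotonicity shows that every $x'\le x$ with $x'\neq x$ satisfies $x'\le x\!\downarrow\!i$ for some $i$ with $x_i\neq 0$; hence $x\in\operatorname{MCV}(u)$ if and only if $u(x)>u(x\!\downarrow\!i)$ for all $i$ with $x_i\neq 0$, and any such $x$ automatically has $u(x)>0$ because the output levels are nonnegative integers. I would state this equivalence first, since it turns every criticality test into finitely many one-coordinate comparisons and, crucially, lets me exploit the coordinatewise bound $(v\oplus v')(y)=\max\{v(y),v'(y)\}\ge v(y)$.

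The inclusion $\operatorname{MCV}(v\oplus v')\subseteq\operatorname{MCV}(v)\cup\operatorname{MCV}(v')$ requires no mergeability at all. Given $x\in\operatorname{MCV}(v\oplus v')$ with $x\neq\mathbf{0}$, I assume without loss of generality that $(v\oplus v')(x)=v(x)\ge v'(x)$. For each $i$ with $x_i\neq 0$ the reformulation gives $v(x)=(v\oplus v')(x)>(v\oplus v')(x\!\downarrow\!i)\ge v(x\!\downarrow\!i)$, so $v(x)>v(x\!\downarrow\!i)$; applying the reformulation again yields $x\in\operatorname{MCV}(v)$. The symmetric case $v'(x)\ge v(x)$ gives $x\in\operatorname{MCV}(v')$.

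For the reverse inclusion I would fix $x\in\operatorname{MCV}(v)$, the case $x\in\operatorname{MCV}(v')$ being symmetric through conditions~(2) and~(3), and show $x\in\operatorname{MCV}(v\oplus v')$. The decisive point is to bound $v'(x)$. If $v'(x)=0$ this is immediate. If $v'(x)>0$, then Lemma~\ref{lemma_critical_vector_below} supplies an $x''\le x$ with $x''\in\operatorname{MCV}(v')$ and $v'(x'')=v'(x)$; condition~(1) forces $x''\neq x$, so $x\ge x''$ with $x\neq x''$, and mergeability condition~(3) gives $v(x)>v'(x'')=v'(x)$. In either case $v'(x)<v(x)$, so $(v\oplus v')(x)=v(x)$; moreover, for every $x'\le x$ with $x'\neq x$ we have $v(x')<v(x)$ (since $x\in\operatorname{MCV}(v)$) together with $v'(x')\le v'(x)<v(x)$ by monotonicity, whence $(v\oplus v')(x')<v(x)=(v\oplus v')(x)$ and therefore $x\in\operatorname{MCV}(v\oplus v')$.

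I expect the only real obstacle to be the subcase $v'(x)>0$ in the reverse inclusion: a minimal critical vector $x$ of $v$ could in principle lose its criticality in the merge if $v'$ lifted the value on some $x'\le x$ with $x'\neq x$ up to the level $v(x)$. Mergeability is precisely the hypothesis that prevents this, since Lemma~\ref{lemma_critical_vector_below} together with condition~(3) forces $v(x)$ to strictly exceed $v'$ at and below $x$. The one detail I would verify carefully is that condition~(1) indeed guarantees $x''\neq x$, so that the strict inequality supplied by condition~(3) is actually available.
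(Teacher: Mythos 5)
Your proof is correct and takes essentially the same route as the paper's: the same two inclusions, with the easy direction using only $(v\oplus v')\ge v,v'$ and the one-step operator, and the reverse direction using Lemma~\ref{lemma_critical_vector_below} together with mergeability condition~(3) of Definition~\ref{def_mergeable} to establish $v(x)>v'(x)$ for $x\in\operatorname{MCV}(v)$. The only differences are presentational---you argue directly via the explicit one-step characterization where the paper argues by contradiction, and you are in fact slightly more careful than the paper, which tacitly skips the case $v'(x)=0$ (where Lemma~\ref{lemma_critical_vector_below} does not apply but the bound $v'(x)<v(x)$ is trivial); note also that your concern about $x''\neq x$ is moot, since condition~(3) as stated requires only $x\ge x''$, not strict inequality.
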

\begin{proof}
  Consider $x\in\operatorname{v\oplus v'}$. Since $(v\oplus v')(x)=\max\{v(x),v'(x)\}$ we assume $(v\oplus v')(x)=v(x)$ and $v'(x)\le v(x)$ w.l.o.g. 
  If $x\notin\operatorname{MCV}(v)$, then there exists a player $1\le i\le n$ with $x_i\neq 0$ such that $v(x\!\downarrow \!i)=v(x)$. However, this 
  implies $(v\oplus v')(x\!\downarrow\!i)\ge v(x\!\downarrow\!i)=v(x)=(v\oplus v')(x)$, which is a contradiction. Thus, we have 
  $\operatorname{MCV}(v\oplus v')\subseteq \operatorname{MCV}(v)\cup\operatorname{MCV}(v')$.   
  
  Consider $x\in\operatorname{MCV}(v)$. First we show $v(x)>v'(x)$. To this end we apply Lemma~\ref{lemma_critical_vector_below} to conclude the existence 
  of a vector $x'\in J^n$ with $x'\le x$ and $v'(x')=v'(x)$. Now the stated inequality is implied by Definition~\ref{def_mergeable}.(3) and we have 
  $(v\oplus v')(x)=v(x)$. Assume $x\notin\operatorname{MCV}(v\oplus v')$ for a moment. Let $1\le i\le n$ be a player with $(v\oplus v')(x\!\downarrow\!i)=(v\oplus v')(x)$. 
  Since 
  $$
    (v\oplus v')(x\!\downarrow\!i)=\max\{v(x\!\downarrow\!i),v'(x\!\downarrow\!i)\}\le \max\{v(x\!\downarrow\!i),v'(x)\}<v(x)=(v\oplus v')(x), 
  $$    
  we obtain a contradiction. Thus, $\operatorname{MCV}(v)\subseteq\operatorname{MCV}(v\oplus v')$ and, by symmetry, also $\operatorname{MCV}(v')\subseteq\operatorname{MCV}(v\oplus v')$, 
  so that $\operatorname{MCV}(v)\cup \operatorname{MCV}(v')\subseteq\operatorname{MCV}(v\oplus v')$. 
\end{proof}
Note that $\operatorname{MCV}(v)\cap\operatorname{MCV}(v')=\emptyset$, i.e., we have the disjoint union $\operatorname{MCV}(v\oplus v')=\operatorname{MCV}(v)\uplus\operatorname{MCV}(v')$.   
  
We say that a minimal critical vector $x\in\operatorname{MCV}(v)$ is \emph{critical for player $i$ and output level $\tau$}  if 
$v(x)\ge \tau$ and $v(x\!\downarrow\!i)<\tau$. So, a given minimal critical vector $x\in \operatorname{MCV}(v)$ (with $x_i\neq 0$) is critical 
for $v(x)-v(x\!\downarrow\!i)$ output levels. Denoting the number of pairs $(x,\tau)$ such that $x\in\operatorname{MCV}(v)$ with $x_i\neq 0$ is critical 
for player $i$ with output level $\tau$ by $c_i(v)$, we have
\begin{equation} 
  c_i(v\oplus v)=c_i(v)+c_i(v')
\end{equation}  
for two mergeable $(j,k)$ simple games $v,v'$ with player set $\{1,\dots,n\}$ and $1\le i\le n$.
  
\begin{definition}
  Let $v$ be a $(j,k)$ simple game with player set $\{1,\dots,n\}$. A player $1\le i\le n$ is called a \emph{null player} if we have 
  $v(x)=v(x')$ for all $x,x'\in J^n$ with $x_j=x_j'$ for all $j\neq i$.
\end{definition}  
Note that we have $x_i=0$ for every null player $i$ and every minimal critical vector $x\in\operatorname{MCV}(v)$. The analog for simple games is that no 
null player is part of a minimal winning coalition. 
  
\begin{definition}
  \label{def_anonymous}
  Let $v$ be a $(j,k)$ simple game with player set $N:=\{1,\dots,n\}$ and $\pi\colon N\to N$ be a permutation, i.e., a bijection. The $(j,k)$ simple game 
  $\pi v$ is defined by $(\pi v)(x)=v(x')$ for all $x\in J^n$ where $x'_i=x_{\pi(i)}$ for all $1\le i\le n$.
  
  A value $\Phi$ on the class of (non-trivial) $(j,k)$ simple games is called \emph{anonymous} if for each permutation $\pi\colon N\to N$ we have 
  $\overline{\Psi}_i(\pi v)=\overline{\Psi}_{\pi(i)}(v)$, where $N$ is the player set of an arbitrary (non-trivial) $(j,k)$ simple game $v$ and $i\in N$ an arbitrary player.     
\end{definition}  

\begin{theorem}
  \label{thm_axiomatization_pgi_variant}
  The value $\overline{\Psi}$ defined in Equation~(\ref{eq_pgi_variant_normalized}) and Equation~(\ref{eq_pgi_variant}) is the unique value for non-trivial $(j,k)$ simple 
  games that satisfies the axioms:
  \begin{itemize}
    \item[(A1)] $i$ is a null player in $v$ $\quad\Rightarrow\quad$ $\overline{\Psi}_i(v)=0$.
    \item[(A2)] $\overline{\Psi}$ is efficient, i.e., $\sum_{i=1}^n \overline{\Psi}_i(v)=1$. 
    \item[(A3)] If $\operatorname{MCV}(v)=\{x\}$ for a game $v$, then $\overline{\Psi}_i(v)=\overline{\Psi}_j(v)$ for all players $i,j$ with $x_i,x_j\neq 0$.
    \item[(A4)] For all mergeable $(j,k)$ simple games $v,v'$ with player set $N$ we have
                $$
                  \overline{\Psi}_i(v\oplus v') =\frac{c(v)\cdot\overline{\Psi}_i(v) + c(v')\cdot\overline{\Psi}_i(v')}{c(v)+c(v')}
                $$
                for all $i\in N$, where $c(\tilde{v})=\sum_{j\in N} c_j(\tilde{v})$ for every non-trivial $(j,k)$ simple game $\tilde{v}$ with player set $N$.  
  \end{itemize}
\end{theorem}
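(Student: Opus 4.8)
The plan is to prove the two directions separately: first that $\overline{\Psi}$ satisfies (A1)--(A4), and then that these four axioms determine a value uniquely. The organizing observation, which I would establish at the very start, is the identity $\overline{\Psi}_i(v)=c_i(v)/c(v)$. Indeed, since a minimal critical vector $x$ with $x_i\neq 0$ is critical for player $i$ at exactly $v(x)-v(x\!\downarrow\!i)$ output levels, we get $c_i(v)=\sum_{x\in\operatorname{MCV}(v),\,x_i\neq 0}\big(v(x)-v(x\!\downarrow\!i)\big)=\Psi_i(v)$, hence $c(v)=\sum_{j}\Psi_j(v)$ and the claimed identity follows from Equation~(\ref{eq_pgi_variant_normalized}). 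With this in hand the existence direction is short: (A1) holds because a null player has $x_i=0$ for every $x\in\operatorname{MCV}(v)$, so $\Psi_i(v)=0$; (A2) is immediate from the normalization, the denominator being positive for non-trivial games by Lemma~\ref{lemma_critical_vector_below}; and (A4) follows by substituting $\overline{\Psi}_i=c_i/c$ together with the additivity $c_i(v\oplus v')=c_i(v)+c_i(v')$ recorded before the theorem. For (A3) I would use that a single minimal critical vector $x$ forces $v(x\!\downarrow\!i)=0$ for every $i$ with $x_i\neq 0$: otherwise Lemma~\ref{lemma_critical_vector_below} would produce a second minimal critical vector below $x\!\downarrow\!i$. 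Hence $\Psi_i(v)=v(x)$ is the same for all players in the support of $x$, giving equal normalized values.

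For uniqueness, let $\Phi$ be any value satisfying (A1)--(A4). First I would pin down $\Phi$ on games $v$ with $\operatorname{MCV}(v)=\{x\}$. By Lemma~\ref{lemma_critical_vector_below} such a game is a threshold game, $v(y)=v(x)$ if $x\le y$ and $v(y)=0$ otherwise; consequently every player $i$ with $x_i=0$ is a null player (changing coordinate $i$ never changes whether $x\le y$), so (A1) gives $\Phi_i(v)=0$ there. Axiom (A3) makes the remaining values equal and (A2) normalizes them, forcing $\Phi_i(v)=1/|\{l:x_l\neq 0\}|$ on the support, which is exactly $\overline{\Psi}_i(v)$.

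The inductive step is where the real work lies. For a general non-trivial game $v$ I would decompose it along its minimal critical vectors: for $A\subseteq\operatorname{MCV}(v)$ set $v^A(y)=\max\{v(x):x\in A,\ x\le y\}$ (and $0$ if no such $x$), so that $v^{\{x\}}$ is the threshold game at $x$ and $v^B\oplus v^C=v^{B\cup C}$; in particular $v=v^{\operatorname{MCV}(v)}$. The key claim to verify is $\operatorname{MCV}(v^A)=A$ with $v^A$ agreeing with $v$ on $A$, which holds because any $x\in A$ dominates only minimal critical vectors of strictly smaller worth. Picking $x^\ast\in\operatorname{MCV}(v)$ and writing $A=\operatorname{MCV}(v)\setminus\{x^\ast\}$, the games $v^{\{x^\ast\}}$ and $v^A$ have disjoint MCV-sets and satisfy the comparability conditions of Definition~\ref{def_mergeable} (if $x^\ast<x'$ then $v(x^\ast)<v(x')$ because $x'$ is minimal critical, and symmetrically), hence they are mergeable with $v^{\{x^\ast\}}\oplus v^A=v$. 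Now (A4) expresses $\Phi_i(v)$ as the $c$-weighted average of $\Phi_i(v^{\{x^\ast\}})$ and $\Phi_i(v^A)$; since the weights $c(\cdot)$ are intrinsic to the games and do not depend on the value, and since $\Phi$ agrees with $\overline{\Psi}$ on $v^{\{x^\ast\}}$ (base case) and on $v^A$ (induction hypothesis, as $|A|=|\operatorname{MCV}(v)|-1$), the same averaging applied to $\overline{\Psi}$ yields $\Phi_i(v)=\overline{\Psi}_i(v)$. The main obstacle is exactly this decomposition lemma: establishing $\operatorname{MCV}(v^A)=A$ and the mergeability of $v^{\{x^\ast\}}$ with $v^A$ correctly, so that (A4) can be applied cleanly in the induction; everything else reduces to the identity $\overline{\Psi}_i=c_i/c$ and the three remaining axioms.
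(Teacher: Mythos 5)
Your proof is correct and follows essentially the same route as the paper: existence via the identity $\Psi_i(v)=c_i(v)$ together with the additivity of $c_i$ under $\oplus$ for mergeable games, and uniqueness by decomposing an arbitrary non-trivial game into threshold games with a single minimal critical vector and applying (A4) inductively over $\operatorname{MCV}(v)$. You merely fill in some details the paper leaves implicit -- the threshold-game structure of a game with a unique minimal critical vector (hence that off-support players are null players), the verification that $\operatorname{MCV}(v^A)=A$, and the explicit check of the mergeability conditions -- but the underlying argument is the same.
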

\begin{proof}
  It is immediate that the value $\overline{\Psi}$ defined in Equation~(\ref{eq_pgi_variant_normalized}) and Equation~(\ref{eq_pgi_variant}) satisfies the axioms 
  (A1), (A2), and (A3). For (A4) we first note $\Psi_i(\tilde{v})=c_i(\tilde{v})$ for every $(j,k)$ simple game $\tilde{v}$ and every player $i$ in $\tilde{v}$. Using 
  the mergeability of $v$ and $v'$ we compute
  $$
    \overline{\Psi}_i(v\oplus v') =\frac{c_i(v\oplus v')}{c(v\oplus v')}  
    = \frac{c_i(v)+c_i(v')}{c(v)+c(v')}  
    = \frac{c(v)\cdot\overline{\Psi}_i(v)+c(v')\cdot\overline{\Psi}_i(v')}{c(v)+c(v')}.
  $$
  
  Conversely, given any value $\Phi$ on the class of non-trivial $(j,k)$ simple games satisfying the axioms (A1) through (A4) we proceed as follows. First we consider an 
  arbitrary non-trivial $(j,k)$ simple game $v$ with $|\operatorname{MCV}(v)|=1$ and let $x$ be the unique minimal critical vector. From (A1), (A2), and (A3) we conclude
  $$
    \Phi_i(v)=\left\{\begin{array}{rcl}
      1\,/\,|\{j\mid x_j\neq 0\}|  & & \text{if }x_i\neq 0, \\  
      0 & & \text{otherwise.}
    \end{array}\right.
  $$
  Now consider any non-trivial (j,k) simple game $\tilde{v}$ with player set $N$ and minimal critical vectors enumerated as $\operatorname{MCV}(\tilde{v})=\left\{x^1,\dots,x^m\right\}$. 
  Denoting the non-trivial $(j,k)$ simple game with unique minimal critical vector $x^h$ by $v^h$, where $1\le h\le m$, we can write
  $$
    \tilde{v}=v^1\oplus v^2\oplus\dots\oplus v^m.
  $$
  Note that the $v^h$ are sequentially mergeable in the sense that $v^{h+1}$ and $v^1\oplus\dots\oplus v^h$ are mergeable for each $h=1,2,\dots,m-1$. We can extend 
  (A4) inductively to a sum of such games to obtain for each player $i\in N$
  $$
    \Phi_i(\tilde{v})=\sum_{h=1}^m c(v^h)\Phi_i(v^h)/\sum_{h=1}^m c(v^h).
  $$
  Thus, the axioms (A1)-(A4) allow us to compute $\Phi_i(\tilde{v})$ for each non-trival $(j,k)$ simple game $\tilde{v}$ and each player $i$ of $\tilde{v}$, i.e., 
  there is at most one value satisfying axioms (A1)-(A4). So, given our first observation on $\overline{\Psi}$, we conclude $\Phi=\overline{\Psi}$. 
\end{proof}
We remark that the axioms (A1) and (A2) mimic similar axioms for simple or TU games that are used frequently in the literature. For axiom (A4) we refer to the discussion 
in \cite{holler1983power} noting that the proof of Theorem~\ref{thm_axiomatization_pgi_variant} is rather similar to the one of \cite[Section III]{holler1983power}. 
Note that for $k=2$ output levels axiom (A3) can be replaced by anonymity, see Definition~\ref{def_anonymous}. However, for $k>2$ we need some kind of stronger axiom 
in order to uniquely define the value of non-trivial $(j,k)$ simple games with a unique minimal critical vector. Of course, axiom (A3) might be considered to be too 
demanding for the cases where $x_i,x_j\neq 0$ and $x_i\neq x_j$. There is an ongoing discussion about properties that a reasonable power index or value should have, 
see e.g.\ \cite{allingham1975economic,ubt_eref62273}. We would also like to point the reader to the two axiomatizations of the Public Good index for $(j,2)$ simple 
games in \cite{sebastien2020public}, which share several axioms on the one hand and use a few different on the other hand. 

\medskip

Another approach to motivate the definition of a value for $(j,k)$ simple games is pursued in \cite{kurz2021axiomatizations} for the Shapley value. 
\begin{definition}
  \label{def_aver_tu_j_k}
  Let $v$ be an arbitrary $(j,k)$ simple game with player set $N=\{1,\dots,n\}$. The \emph{average game}, denoted by $\widetilde{v}$, associated to $v$  is defined by
  \begin{equation}
    \label{eq_aver_tu_j_k}
    \widetilde{v}(S)=\displaystyle\frac{1}{j^{n}(k-1)}\sum_{x\in J^{n}}{\left[v(\mathbf{(j-1)}_S\,,x_{-S})-v(\mathbf{0}_S\,,x_{-S})\right]}
  \end{equation}
  for all $ S\subseteq N$.
\end{definition}
For the $(3,3)$ simple game $v$ from Example~\ref{ex_3_3_simple_game} the average game $\tilde{v}$ is given by $\tilde{v}(\emptyset) = 0$, $\tilde{v}(\{1\})=\frac{1}{2}$, 
$\tilde{v}(\{2\})=\frac{5}{18}$, $\tilde{v}(\{3\})=\frac{1}{6}$, $\tilde{v}(\{1,2\})=\frac{2}{3}$, $\tilde{v}(\{1,3\})=\frac{2}{3}$, $\tilde{v}(\{2,3\})=\frac{1}{2}$, 
and $\tilde{v}(\{1,2,3\})= 1$.
Note that $\tilde{v}$ always is a TU game taking values between $0$ and $1$.

In \cite[Theorem 4.1]{kurz2021axiomatizations} it was shown that the Shapley value of a $(j,k)$ simple game $v$, as defined in e.g.\ \cite{freixas2005shapley}, equals 
the Shapley value of the TU game $\tilde{v}$. Unfortunately there is no such nice relation between the Public Good value and our analogs for $(j,k)$ simple games 
since for the $(3,3)$ simple game from Example~\ref{ex_3_3_simple_game} and the corresponding average TU game $\tilde{v}$ we have
$$
  \operatorname{PGV}_1(\tilde{v})=\frac{51}{18},\quad
  \operatorname{PGV}_2(\tilde{v})=\frac{44}{18},\quad\text{and }\operatorname{PGV}_3(\tilde{v})=\frac{42}{18}.
$$

\medskip

To sum up, we have seen that different generalizations of the Public Good value for TU games or the normalized Public Good index for simple games to the class of (non-trivial) $(j,k)$ 
simple games, including axiomatizations, are possible. As anticipated e.g.\ in \cite{freixas2012probabilistic}, a power index for simple games can admit more than one reasonable 
extension for $(j,k)$ simple games. From our personal point of view, Theorem~\ref{thm_axiomatization_pot} provides the most convincing variant. But this may be just 
a matter of taste or might depend on the application. The question of the public good properties of the proposed values is not touched at all. As done in \cite{sebastien2020public} 
for $(j,2)$ simple games, other power indices based on \textit{Riker's Size Principle} \cite[p.~32]{riker1962theory} may be treated similarly.


\end{document}